\def\BibTeX{{\rm B\kern-.05em{\sc i\kern-.025em b}\kern-.08em
    T\kern-.1667em\lower.7ex\hbox{E}\kern-.125emX}}
\newtheorem{theorem}{Theorem}
\newtheorem{lemma}{Lemma}
\newtheorem{definition}{Definition}
\DeclareMathOperator{\sgn}{sgn}
\begin{document}

\title{\LARGE \bf Bipartite Consensus in the Presence of Denial of Service Adversary}

\author{Gopika R$^{1}$, Ankita Sharma$^{2}$ and Rakesh R Warier$^{3}$%

\thanks{$^{1}$ Gopika R. is a PhD student in the Department of Electrical, Electronics and Instrumentation engineering, BITS Pilani, KK Birla Goa Campus, Zuari Nagar, Goa, India
{\tt\small p20190058@goa.bits-pilani.ac.in}
}

\thanks{$^{2}$ Ankita Sharma is an Assistant Professor in School of Basic Science (Mathematics), Indian Institute of Information Technology, Una,  Himachal Pradesh, India 
{\tt\small ankita.iitm22@gmail.com}
}

\thanks{$^{3}$ Rakesh R Warier is an Assistant Professor in the Department of Electrical, Electronics and Instrumentation engineering, BITS Pilani, KK Birla Goa Campus, Zuari Nagar, Goa, India
{\tt\small rakeshw@goa.bits-pilani.ac.in}
}
}

\maketitle

\begin{abstract}

Attacks on a set of agents with cooperative and antagonistic interactions attempting to achieve linear bipartite consensus are considered here. In bipartite consensus, two clusters are formed, agents in each cluster converging to a final state which is negative of the other cluster's final state. The adversary seeks to slow down the bipartite consensus by a Denial of Service (DoS) type attack where the attacker has the capability to break a specific number of links at each time instant. The problem is formulated as an optimal control problem and the optimal strategy for the adversary is determined.\\
\end{abstract}


\section{INTRODUCTION}
Networked systems form the backbone of critical infrastructures such as power grids, transportation systems, home automation systems. Other examples of interconnected networked systems include robotic systems and autonomous aerial systems. These are interconnected by design separated geographically and functionally. Distributed coordination of multiple robotic vehicles, including unmanned aerial vehicles, unmanned ground vehicles, and unmanned underwater vehicles, has been a very active research subject by the systems and control community \cite{ren2005survey,cao2012overview}. The distributed control of multiple agents can offer more flexibility, scalability, adaptivity, and more robustness to individual failure. Additionally, they find applications in distributed networks including smart grids \cite{dorfler2013synchronization}, autonomous under water vehicles, unmanned aerial vehicles, etc.

Consensus problems normally deal with cooperative systems. In these types of cooperative systems, the consensus is achieved through the communication of agents which are characterized by edge weights that are non-negative. There can be situations wherein the agents in a system need to repel each other or maintain a certain formation in which we will be forced to use a negative weight for interaction. Therefore, in short, non-negative weights corresponds to a system with cooperative interactions and negative weights implies antagonistic interactions. The formation of several clusters is generally referred to as a multi-partite consensus. If the number of clusters in a multi-partite consensus is two, then they are referred to as bipartite consensus where the agent forms two different clusters and they converge to a point which is different only in the sign of the value \cite{altafini2012consensus, altafini2012dynamics}. We consider the problem of a set of agents trying to achieve bipartite consensus in the presence of an adversary.  

The networked multi-agent system can suffer from malicious interference which can disrupt the performance of the system. The robust performance of a networked multi-agent system requires physical attacks and cyber-attacks to be detected and countermeasures to be taken. Several attacks on networked control systems have been reported in the near past. These attacks can be classified as denial-of-service attacks, replay attacks, and fault data injection attacks \cite{teixeira2015secure}. The authors in \cite{dibaji2019systems} overview the existence of several categories of attacks on power and transportation fields on a control-theoretic framework. The study was conducted on the basis of prevention, resilience, and detection $\&$ isolation of the different types of attacks. Authors in \cite{amin2009safe} studied the problem of security on a linear dynamical system that is prone to DoS attacks and design a controller that minimizes the effect of this attack. A more comprehensive attack and system model considering fault detector mechanism and model uncertainties were proposed by \cite{teixeira2015secure}. The work by Pasqualetti et. al. \cite{pasqualetti2015control} explored the challenges of detection and identification of attack, using a descriptor system model. 

Preliminary work in the area of linear consensus of multi-agent systems under DoS attack was considered in \cite{khanafer2012consensus}. They model the problem with cooperative interactions among the agents and finds the optimal strategy to slow down the consensus. The extension of the work was made in \cite{khanafer2013robust} by introducing the effects caused by the presence of a defender. This system was also modeled based on Pontryagin's Maximum Principle and discussed the existence of saddle point equilibrium. In \cite{khanafer2013robust} the authors considered only the cooperative interactions among the agents. 

To the best of our knowledge, the performance of a network with cooperative and antagonistic interactions in the presence of an adversary remains unexplored. Here, we consider a set of agents that are interconnected by a signed graph that is structurally balanced. The set of agents are trying to achieve bipartite consensus. This system is disrupted by an external adversary which can cut links between agents. There is a constraint on the maximum number of links that can be broken at a certain time instant. The adversary has the ability to break both positive and negative links. The problem is formulated as an optimal problem and the optimal attack strategy for the adversary for slowing down the bipartite consensus is derived. The results are illustrated by numerical simulations. 

The rest of the paper is organized as follows, section \ref{sec:bipartite} talks about the basic preliminaries and section \ref{problemform} is about the problem formulation used in this article. Section \ref{sec:simulation} presents the simulation results with a numerical example and the article is concluded by section \ref{sec:conclusion} which also provides the future scope of the work.

\section{BIPARTITE CONSENSUS} \label{sec:bipartite}

When a group of agents interacts over a signed graph, they can form a bipartite consensus where values of agents converge to two clusters. In this section we summarize some fundamental results on bipartite consensus from works \cite{altafini2012consensus} and \cite{altafini2012dynamics}.

Consider a signed graph $G=\left(V, E, A \right)$, where $V=\left\{v_1, v_2, \ldots, v_n\right\}$ is the set of nodes or agents, $E$ is the set of edges and $A$ is the adjacency matrix that contains or represents the interconnection between the agents. The non-diagonal elements in the matrix $A$ represent the weights between the agents and the diagonal entries in the matrix are considered to be zero. The Laplacian matrix for the system can be defined as $L^G=C-A$, where $C$ is the connectivity matrix which will be a diagonal matrix with the values as the sum of the absolute value of the adjacent links connected to that particular node. Therefore, the Laplacian matrix $L^G$ can be represented as
\begin{eqnarray}
    l^{G}_{ik} = \left\{
    \begin{array}{ll}
        \sum_{j \rightsquigarrow i}\lvert{a_{ij}}\rvert  & k\ =\ i \\ 
        -a_{ij} & k\ \neq\ i.\\
    \end{array}
    \right. \label{eq:Laplacian}
\end{eqnarray}

The dynamics of $n$ agents $x_i\in \mathbb{R}$ connected by the graph $G$ is given by,
\begin{align}
    \dot{x}=-L^{G}x(t)
    \label{MathDynamics}
\end{align}
where $x =[x_1, \cdots , x_n]^\top$ and $L_G$ is calculated as in \eqref{eq:Laplacian}. The idea of structurally balanced graph is defined below. 

\begin{definition}
\label{strbalanced} \cite{altafini2012consensus}
A graph $G$ is structurally balanced if and only if 
\begin{enumerate}
    \item All cycles of $G$ are positive
    \item $0$ is an eigenvalue of $L^G$ which is the Laplacian matrix of the graph $G$ as defined in \eqref{eq:Laplacian}.
\end{enumerate}
\end{definition}

A structurally balanced graph that satisfies the properties as in definition \ref{strbalanced} corresponds to the partitioning of a signed graph into two sub-communities in such a manner all the edges within a sub-community have positive weights and all the edges connecting these sub-communities with other sub-communities has a negative weight. 

\begin{lemma}
The system \eqref{MathDynamics} admits a bipartite consensus solution if and only if the graph $G$ is structurally balanced \cite{altafini2012consensus}.

Here, bipartite consensus implies $\lim_{t \to \infty} x_i = | \beta | $ where $\beta \neq 0$, for all $i= 1, \cdots, n$.
\label{lemma structurally balanced}
\end{lemma}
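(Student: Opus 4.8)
The plan is to prove both directions by exploiting the \emph{gauge transformation} associated with a structural-balance partition, which reduces the signed-Laplacian dynamics to an ordinary (nonnegatively weighted) consensus problem whose asymptotic behavior is already understood. Throughout I assume the underlying graph is connected, so that the zero eigenvalue will turn out to be simple.

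For the sufficiency direction (structurally balanced $\Rightarrow$ bipartite consensus), I would first invoke the partition guaranteed by structural balance: since all cycles of $G$ are positive, the node set splits as $V = V_1 \cup V_2$ with positive edges inside each part and negative edges across. Encode this by the signature matrix $D = \mathrm{diag}(\sigma_1, \ldots, \sigma_n)$, with $\sigma_i = +1$ for $i \in V_1$ and $\sigma_i = -1$ for $i \in V_2$, noting $D^2 = I$. The key algebraic step is the change of variables $y = Dx$, under which $\dot{y} = -D L^G D\, y$; I would then verify entrywise that $\bar{L} := D L^G D$ has diagonal $\bar{l}_{ii} = l^G_{ii}$ and off-diagonal $\bar{l}_{ik} = -\sigma_i \sigma_k a_{ik} = -|a_{ik}|$, so that $\bar{L}$ is exactly the Laplacian of the \emph{unsigned} graph $|G|$. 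Because $\bar{L}$ is a similarity transform of $L^G$ (as $D = D^{-1}$), the two share a spectrum; and $\bar{L}$ is the standard Laplacian of a connected nonnegatively weighted graph, hence positive semidefinite with a simple zero eigenvalue and eigenvector $\mathbf{1}$. Standard consensus theory then gives $y(t) \to \alpha \mathbf{1}$, so that $x_i(t) = \sigma_i y_i(t) \to \sigma_i \alpha$; setting $\beta = \alpha$ yields $|x_i| \to |\beta|$, with the clusters $V_1$ and $V_2$ settling on $+\beta$ and $-\beta$ respectively.

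For the necessity direction (bipartite consensus $\Rightarrow$ structurally balanced), I would argue by contraposition. A nonzero bipartite limit $x_i \to \pm\beta$ forces the sign vector $v = [\sigma_1, \ldots, \sigma_n]^\top$, with $\sigma_i = \sgn(\lim x_i)$, to be an equilibrium, i.e. $L^G v = 0$; thus $0$ is an eigenvalue with a $\pm 1$ eigenvector, giving clause (2) of Definition \ref{strbalanced}. Reading $L^G v = 0$ rowwise then shows that every edge weight must be consistent with the two-coloring induced by the $\sigma_i$ — positive within a color class and negative across — which is precisely the condition that no cycle is negative, giving clause (1).

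I expect the main obstacle to be the necessity direction: specifically, ruling out the possibility that the observed sign pattern of the limit is an accident that fails to propagate consistently around every cycle. The argument must show that a single $\pm 1$ null vector of $L^G$ already enforces global sign consistency (no frustrated cycle), which is the heart of Harary's balance theorem and is what makes the equivalence nontrivial. The sufficiency direction, by contrast, is essentially the bookkeeping verification that $D L^G D$ is the unsigned Laplacian, together with a citation to the known connected-graph consensus result.
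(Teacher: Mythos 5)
The paper does not actually prove this lemma: its ``proof'' is a one-line deferral to \cite{altafini2012consensus}, so there is no in-paper argument to compare yours against. Your proposal reconstructs precisely the standard argument from that reference --- the gauge (signature) transformation $D=\mathrm{diag}(\sigma_1,\ldots,\sigma_n)$, $D^2=I$, under which $DL^GD$ becomes the ordinary Laplacian of the unsigned connected graph $|G|$, followed by classical consensus theory --- and it is essentially correct. Three small points to tighten. First, the connectivity assumption you state up front is genuinely necessary (the equivalence fails for disconnected graphs) and is also implicit in \cite{altafini2012consensus}; it should be carried into the lemma statement. Second, the conclusion $\beta\neq 0$ is not automatic: with the symmetric Laplacian you get $y(t)\to\alpha\mathbf{1}$ with $\alpha=\tfrac1n\mathbf{1}^{\top}Dx(0)$, which vanishes on a measure-zero set of initial conditions; so ``admits a bipartite consensus solution'' must be read as existence (or genericity) over initial data, not as a claim for every $x(0)$. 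Third, in the necessity direction the step ``reading $L^Gv=0$ rowwise shows sign consistency'' hides the one real computation: multiplying row $i$ by $\sigma_i$ gives $\sum_{j}\bigl(|a_{ij}|-\sigma_i\sigma_j a_{ij}\bigr)=0$, and since each summand is nonnegative (because $|\sigma_i\sigma_j a_{ij}|=|a_{ij}|$) every term must vanish, forcing $\sgn(a_{ij})=\sigma_i\sigma_j$ on every edge; positivity of all cycles then follows by telescoping $\prod\sigma_i\sigma_j=1$. You correctly identify this as the crux (Harary's balance theorem), but the nonnegativity-of-each-term observation is the missing sentence that actually closes it. With those additions your write-up would serve as a self-contained proof where the paper currently has none.
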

\begin{proof}
The proof is provided in \cite{altafini2012consensus} and is omitted here for brevity. 
\end{proof}

Consider the example in fig \ref{fig1}. The system has $11$ agents that form two different groups. The agents within a sub-community have positive weights thus showing cooperative interaction and the weights between the agents of sub-community A and sub-community B are negative thus showing an antagonistic interaction. As shown in \cite{altafini2012consensus}, the agents of two sub-communities 'repel' each other, while moving towards agents of the same sub-community forming a bipartite consensus, essentially two clusters. 
\begin{figure*}[h] 
    \centering
    \includegraphics[scale=1]{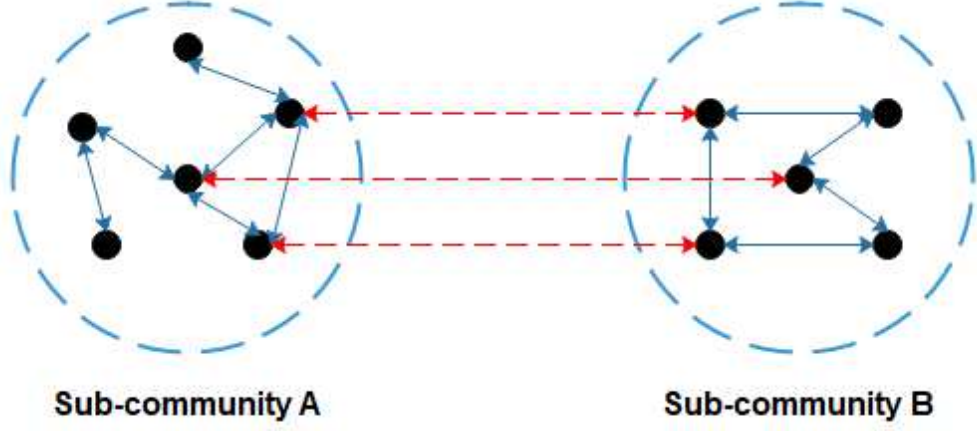}
    \caption{Bipartite consensus of a multi-agent system; The black circles indicate the different agents, blue connections (links) represent the cooperative interactions and the red connections indicate the antagonistic interaction }
    \label{fig1}
\end{figure*}

In the next section we consider the problem of bipartite consensus under attack.

\section{PROBLEM FORMULATION}
\label{problemform}
Consider a network of interconnected systems of weighted undirected graphs with $n$ agents or nodes. This can be represented using  $G=(V,E)$ where $V$ represents the vertices which are agents $\left[v_{1}, v_{2},..., v_{n}\right]$ and $E$ represents the edges. It is important that the network graph must be structurally balanced as in definition \ref{strbalanced}. It is assumed that the adversary has complete knowledge of the system. 
  
The state of the nodes at time $t$ is represented as $x=\left[x_{1}(t), x_{2}(t),...,x_{n}(t)\right]^{T}$. Let the dynamics of the system be 
\begin{align}
    \dot{x}=-L^{G}(t)x(t)
    \label{dynamics}
\end{align}
where $L^{G}$ is defined as the difference between the connectivity matrix $C$ and the adjacency matrix $A$. The matrix $A$ contains the weights of that particular links which are any scalar values and the self loops such as $a_{11} $ or $a_{22}$, etc are not considered. These values are considered in the connectivity matrix $C$. Therefore, as a result the matrix $C$ will be a diagonal matrix with the values being the sum of the absolute values of adjacent links connected to that node, $C_{ii}=\sum_{j \in adj(i)}\lvert{a_{ij}}\rvert$. The weights of the system can be either zero, positive value or a negative value based on the following condition wherein $C_k$ and $C_l$ represents the set of nodes that belongs to the two different clusters such that $C_k \cup C_l = V$ and $C_k\ \cap\ C_l\ =\ \emptyset$. $a_{ij}$ follows the standard rules if $(i,j) \in E$ and if $(i,j)$ belong to the same cluster, then $a_{ij}>0$ and if $(i,j)$ belongs to different cluster, then $a_{ij}<0$. $a_{ij}=0$ if $(i,j) \notin E$. From Definition \ref{strbalanced} when the graph is structurally balanced, the agents of the same cluster will have positive connections and the connection between the inter-cluster agents will be negative. From Lemma \ref{lemma structurally balanced}, this will result in agents achieving bipartite consensus, where each cluster of agents converging to values differing only in sign. Let us consider a structurally balanced network graph with the system dynamics as 
\begin{equation}
    \dot{x}=-L^{G}(t)x(t)
\end{equation}
where $L^{G}=C-A$.
\begin{eqnarray}
    l^{G}_{ik} = \left\{
    \begin{array}{ll}
        \sum_{j \rightsquigarrow i}\lvert{a_{ij}}\rvert  & k\ =\ i \\ 
        -a_{ij} & k\ \neq\ i.\\
    \end{array}
    \right.
\end{eqnarray}
where $j \rightsquigarrow i$ implies $j$ is connected to $i$. 

$L^{G}$ is a matrix that represents the system before the attack. The adversary aims to slow down the consensus by breaking the links that connect different agents. Let $\alpha$ be the maximum number of links that the adversary can disrupt at each time instant. The control is given by $u(t)=[u_{12}(t), u_{23}(t),...,u_{(n-1)n}(t)]^{T}$, where $u_{ij}(t) \in \{0, 1\}$. When $u_{ij}(t) =1$, the link $(i,j)$ is broken at time $t$. The weight of each link $(i, j)$ can be redefined as $a^{a}_{ij}=a_{ij}(1-u_{ij})$ and the graph under attack can be modelled as 
\begin{eqnarray}
    l^{Ga}_{ik} = \left\{
    \begin{array}{ll}
        \sum_{j \rightsquigarrow i}\lvert{a_{ij}(1-u_{ij})}\rvert  & k\ =\ i \\ 
        -a_{ij}(1-u_{ij})& k\ \neq\ i. \\
    \end{array}
    \right.
\end{eqnarray}
The objective function of the adversary can be defined as
\begin{equation}
    J(u)=\int ^{T}_{0}k(t)\ x(t)^{T}M(t)\ x(t)\ dt,
\end{equation}
where $k(t)$ is the positive kernel integrable over $[0,T]$ and $M$ is the connection matrix which indicates all the connections between the sub-communities of the bipartite network graph. 
\begin{align}
    M_{ij}=\left\{ 
    \begin{array}{cll}
    -\sgn(a_{ij}) & ; & i \ \neq \ j\\
    n-1 & ; & i \ = \ j  \\ 
    \end{array}\right.
\end{align}
The problem can be formulated in a optimal control framework as follows
\begin{align}
    & \underset{u(t) \in U}{\max}\ J(u) \nonumber \\
    s.t.\ \ \ & \dot{x}(t)\ =\ -L^{G}(t)x(t). 
\end{align}
The Hamiltonian is defined as
\begin{eqnarray}
    H(x, p, u) = k(t)x(t)^{T}Mx(t)-p(t)^{T}L^{G}(t)x(t) \label{Hamiltonian}
\end{eqnarray}
Considering the first order necessary conditions for optimality \cite{naidu2002optimal} can be formulated wherein the state and co-state dynamics becomes 
\begin{align}
    \frac{\partial H}{\partial p}  & = \dot{x}(t) = -L_{G}(t)x(t) \label{eq:opti1} \\
    \frac{-\partial H}{\partial x} & = \dot{p}(t) = -k(t)Mx(t)+L_{G}(t)^{T}p(t) \label{eq: opti2}\\
    \frac{\partial H}{\partial u} & = 0 \label{eqq:opti3}
\end{align}
The optimal control can be obtained as 
\begin{align}
u^{*} & = \underset{u \in U}{\arg \max}\ H(x,p,u) \label{opti:4}
\end{align}
To solve for $u^{*}$, we need to maximize the $H$. Solving for $p^{T}L^{G}x$ and substituting in (\ref{Hamiltonian})
\begin{eqnarray}
    p^{T}L^{G}x & = & \sum_{j=2}^{n}\sum_{i=1}^{n-1}\lvert{a_{ij}}\rvert(1-u_{ij})f_{ij}
\end{eqnarray}
 where $f_{ij}=\big(p_{i}-\sgn(a_{ij})p_{j}\big)\big(x_{i}-x_{j}\big)$.
\begin{equation}
     \max H(x, p, u)=k(t)x(t)^{T}Mx(t) - \sum_{j=2}^{n}\sum_{i=1}^{n-1}(1-u_{ij})f_{ij}
\end{equation}
\begin{eqnarray}
    u^{*}_{ij} = \left\{
    \begin{array}{cll}
        \{0, 1\} & ; & \text{if}\ f_{ij} = 0 \\
        1 & ; & \text{if}\ (i,j)\ \in I_{t} \\
        0 & ; & \text{if}\ f_{ij}>0.
    \end{array}
    \right.
\end{eqnarray}
 where $I_{t}$ is the ordered set of values of $f_{ij}$ that are less than zero and can be attacked, that is, 
 \begin{align}
     I_{t} & = \big\{ (i,j) \in V : f_{ij}<0,  \text{ and } f_{ij}<f_{\alpha + 1} \big\}
 \end{align}
 
The function $f_{ij}$ depends on both the state $x(t)$ and co-state $p(t)$ variables. The objective of the adversary is to slow down the consensus and for this he can break the links that connect these agents with each other. Therefore, we need to find an optimal strategy for the adversary so that the $J(u)$ increases because of the disruption caused by him. We need to look at the cost function of the adversary to determine the optimal strategy which is explained using the following theorem.

\begin{theorem}
For a particular time $t$, the optimal strategy of the adversary is to break the $\alpha$ number of links with the highest $w_{ij}$ value where $w_{ij}=\lvert{a_{ij}}\rvert \big(x_{j}-\sgn(a_{ij})x_{i} \big)^{2}.$ 
\end{theorem}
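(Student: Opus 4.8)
The plan is to start from the pointwise form of Pontryagin's Maximum Principle already recorded in \eqref{opti:4}: for each fixed $t$ the optimal $u^{*}(t)$ maximizes $H(x,p,u)$ over the admissible set $U=\{u\in\{0,1\}^{|E|}:\sum_{i<j}u_{ij}\le\alpha\}$. Because the term $k(t)x^{T}Mx$ carries no dependence on $u$, only the bilinear term $-p^{T}L^{Ga}x$ is relevant, and after collapsing the symmetric double sum over edge pairs I would write $p^{T}L^{Ga}x=\sum_{(i,j)\in E,\ i<j}(1-u_{ij})\,|a_{ij}|\,f_{ij}$ with the per-edge quantity $f_{ij}=\big(p_i-\sgn(a_{ij})p_j\big)\big(x_i-\sgn(a_{ij})x_j\big)$. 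Flipping $u_{ij}$ from $0$ to $1$ then changes $H$ by the single additive increment $|a_{ij}|f_{ij}$, and no increment depends on any other $u_{k\ell}$, so the maximization decouples completely across edges. Hence maximizing $H$ under the budget $\alpha$ is a pure top-$\alpha$ selection: the adversary cuts the $\alpha$ links with the largest increments $|a_{ij}|f_{ij}$. This reduces the theorem to one claim, namely that ranking links by $|a_{ij}|f_{ij}$ is the same as ranking them by $w_{ij}=|a_{ij}|\big(x_j-\sgn(a_{ij})x_i\big)^{2}$; note that $\big(x_j-\sgn(a_{ij})x_i\big)^{2}=\big(x_i-\sgn(a_{ij})x_j\big)^{2}$, so $w_{ij}=|a_{ij}|\big(x_i-\sgn(a_{ij})x_j\big)^{2}$ already shares the factor appearing in $f_{ij}$.

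To make the identification I would exploit the gauge symmetry that structural balance guarantees. By Definition \ref{strbalanced} there is a signature matrix $D=\operatorname{diag}(\sigma_1,\dots,\sigma_n)$, $\sigma_i\in\{+1,-1\}$, with $\sigma_i\sigma_j=\sgn(a_{ij})$ on every edge, so that $\widehat{L}:=DL^{G}D$ is the Laplacian of a nonnegatively weighted, purely cooperative graph. Setting $y=Dx$ and $q=Dp$ turns the state equation \eqref{eq:opti1} and its co-state companion into the standard cooperative consensus problem treated in \cite{khanafer2012consensus}. A direct substitution using $\sigma_i\sigma_j=\sgn(a_{ij})$ gives $x_i-\sgn(a_{ij})x_j=\sigma_i(y_i-y_j)$ and $p_i-\sgn(a_{ij})p_j=\sigma_i(q_i-q_j)$, whence $f_{ij}=(q_i-q_j)(y_i-y_j)$ and $w_{ij}=|a_{ij}|(y_i-y_j)^{2}$. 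The signed statement is therefore equivalent to the cooperative one: the $\alpha$ edges of largest $|a_{ij}|(y_i-y_j)^{2}$ are the optimal ones to remove.

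The remaining step, which I expect to be the genuine obstacle, is to tie the co-state differences $q_i-q_j$ to the state differences $y_i-y_j$ along the optimal trajectory. Since the problem is linear with quadratic cost, the co-state is linear in the state, $q(t)=R(t)y(t)$, where $R(t)$ solves a matrix differential (Lyapunov/Riccati-type) equation $\dot R=R\widehat{L}+\widehat{L}R-k\,DMD$ with terminal condition $R(T)=0$; symmetry of $\widehat{L}$ and of $DMD$ keeps $R(t)$ symmetric, and the nonnegativity of the running cost forces $R(t)\succeq 0$. Positive semidefiniteness then makes each increment $|a_{ij}|f_{ij}=|a_{ij}|(q_i-q_j)(y_i-y_j)$ a \emph{nonnegative} multiple of $w_{ij}=|a_{ij}|(y_i-y_j)^{2}$, so cutting is always weakly beneficial and the budget is spent on the largest $w_{ij}$. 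The delicate point is that a clean top-$\alpha$ conclusion requires this multiple to be rank-preserving across \emph{different} edges — ideally a common factor $c(t)\ge 0$, i.e.\ $q_i-q_j=c(t)(y_i-y_j)$ on every edge. Establishing that (or at least that the edge-dependent factors never reverse the ordering) is where the spectral compatibility of $R(t)$ with $\widehat{L}$ and $DMD$, all built from the same two communities, must be invoked. Once this is secured, the decoupled selection of the first paragraph picks exactly the $\alpha$ links of largest $w_{ij}$, which proves the theorem.
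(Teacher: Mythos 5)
There is a genuine gap, and you have correctly located it yourself: the step where you need the co-state differences to be a common nonnegative multiple of the state differences, $q_i-q_j=c(t)(y_i-y_j)$ on every edge, is not established and in fact fails in general. Writing $q=R(t)y$ with $\dot R=R\widehat L+\widehat L R-k\,DMD$, $R(T)=0$, the ansatz $R=c(t)I+d(t)\mathbf{1}\mathbf{1}^{T}$ (the only form that yields a common factor on all edge differences) is not invariant under this flow: since $\widehat L\mathbf{1}=0$, the commutator terms contribute $2c(t)\widehat L$ to $\dot R$, which is not a combination of $I$ and $\mathbf{1}\mathbf{1}^{T}$ except for very special graphs, so $R(t)$ immediately leaves the assumed class. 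Positive semidefiniteness of $R$ gives you only aggregate inequalities, not the edgewise sign of $f_{ij}=(q_i-q_j)(y_i-y_j)$, and certainly not that the edge-dependent factors preserve the ordering by $w_{ij}$. So the reduction of ``rank by $|a_{ij}|f_{ij}$'' to ``rank by $w_{ij}$'' — which is the entire content of the theorem beyond the PMP bookkeeping — remains open in your argument.

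The paper closes this differently and never touches the co-state at that stage. It fixes a short interval $[s,s+\Delta s]$, compares two controls $u^{A}$ and $u^{B}$ that differ only on the single edge $(i,j)$, and expands the resulting cost difference using $e^{-Lt}=I-tL+\mathcal{O}(t^{2})$. Because $L_{1}\nu=L_{2}\nu=0$ and $M=nI-\nu\nu^{T}$, the first-order term collapses to $\Lambda(t,t^{*})=2n(t-t^{*})(L_{1}-L_{2})+\mathcal{O}(t^{2})$, and $x^{T}(L_{1}-L_{2})x=|a_{ij}|\big(x_i-\sgn(a_{ij})x_j\big)^{2}=w_{ij}$ since $L_{1}-L_{2}$ is the signed Laplacian of the single removed edge. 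Thus the marginal gain from cutting edge $(i,j)$ over an infinitesimal horizon is directly $2n(t-t^{*})w_{ij}\ge 0$, edge by edge, and the top-$\alpha$ selection by $w_{ij}$ follows without any claim about $p$. Your gauge transformation $y=Dx$, $q=Dp$ reducing the signed problem to the cooperative one of \cite{khanafer2012consensus} is a nice observation and is consistent with the paper, but to complete your route you would either have to prove the order-preservation property for $R(t)$ (which I do not believe holds) or replace that step with a short-horizon exchange argument of the paper's type.
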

\textbf{Proof.} Let us consider $\alpha$ be the number of links that are break by the adversary at time instant $t$. The value of $f_{ij}$ cannot change in a finite interval as it solely depends on $x(t)$ and $p(t)$ which does not change in that interval which results in the control to remain unchanged. 

Let $[s, s+\Delta s]$ be a sub-interval of $[0, T]$ in which the adversary applies the strategy $u^{A}$ with a system matrix as $L_{1}$. Therefore, the state trajectory is given as
\begin{align*}
    x(t)=e^{-L_{1}(t-s)}x(s) \ ;\ \ t \in [s, s+\Delta s]
\end{align*} 
Let $P(t)$ be $e^{-L_{1}t}$. Let $\nu$ be the eigenvector corresponding to eigenvalue $0$ of the matrix $L_{1}$. Let $\nu\nu^{T}$ be $K$. Then the connection matrix $M$ can be expressed as $ (nI-K) $  where $n$ is the number of agents in the network. Let $u^{B}$ be the strategy that is similar to the $u^{A}$ except at link $(i, j)$ with a system matrix as $L_{2}$. At $ (i, j) $, $u^{A}_{ij}=0$ where as $u^{B}_{ij}=1$. Let $Q(t)$ be a doubly stochastic matrix such that  $Q(t)=e^{-L_{2}t}$,\ $t\geq$0. We need to show that strategy $u^{B}$ is better than $u^{A}$. The aim of the adversary is to maximize the cost function. We can obtain the conditions for which the cost for $u^{A}$ greater than the cost for $u^{B}$. 
\begin{align}
& \int_{s}^{s+\Delta s} k(t)x(s)^{T}P(t-s)MP(t-s)x(s) dt    \nonumber \\
        & < \int_{s}^{t^{*}} k(t)x(s)^{T}P(t-s)MP(t-s)x(s) dt +\int_{t^{*}}^{s+\Delta s}k(t) \nonumber \nonumber \\
        & x(s)^{T}P(t^{*}-s)Q(t-t^{*})MQ(t-t^{*})P(t^{*}-s)x(s) dt. 
\end{align}

The cost of strategy corresponding to $u^{B}-u^{A}$ can be written as
\begin{equation*}
    \begin{array}{l}
        \int_{s}^{s+\Delta s} kx^{T}[P(t^{*}-s)Q(t-t^{*})MQ(t-t^{*})P(t^{*}-s)- \\
        P(t-s)MP(t-s)]x dt >0 \\
    \end{array}
\end{equation*}
\begin{equation}
    \int_{s}^{s+\Delta s} kx(s)^{T}\Lambda(t,t^{*}) x(s) dt > 0\ ,
\end{equation}
 
where $\Lambda=P(t^{*}-s)Q(t-t^{*})MQ(t-t^{*})P(t^{*}-s)- P(t-s) M P(t-s)$. As $t$ decreases, $P(t)=I+tL_{1}+\mathcal{O}(t^{2})$ and $Q(t)=I+tL_{2}+\mathcal{O}(t^{2})$. Substituting for $P(t), Q(t)$ and $M$, the $\Lambda(t,t^{*})$ can be simplified as,
\begin{equation}
    \Lambda(t,t^*)=-2n(t-t^*)(L_2-L_1)+\mathcal{O}(t^2).
    \label{lam}
\end{equation}
 
For sufficiently small $t$ and $t^*$, the first term is very much larger than the second term, therefore it dominates. At the link $(i, j)$, $L_{1_{ij}}>L_{2_{ij}}=0$, and at all the links $L_{1_{kl}}=L_{2_{kl}}$ where $(k, l) \neq (i, j)$.
\begin{equation}
    \begin{array}{lll}
        h(t,x(s)) & = & 2n(t-t^{*}) \times \\
        & & \sum_{l=1}^{n} \sum_{k=1}^{l-1} (L_{1kl}-L_{2kl})(x_{k}-\sgn(L_{ij})x_{l})^2  \\
        & = & 2n(t-t^{*})L_{1ij}(x_{i}-\sgn(L_{ij})x_{j})^2 
    \end{array}
    \label{eq:finalH}
\end{equation}

In a bipartite consensus, the adversary can either break the link with a positive weight or he can break the link with a negative weight. When the link with a positive weight is attacked, (\ref{lam}) becomes $h(t,x(t))=2n(t-t^{*})\lvert{a_{ij}}\rvert(x_{i}-x_{j})^{2}$ and if the link with a negative weight is attacked, then $h(t,x(t))=2n(t-t^{*})\lvert{a_{ij}}\rvert(x_{i}+x_{j})^{2}$. This is possible only when the graph has multiple connected components and at least one link exists between them. If the graph at time $s$ is a single connected component, then the breaking of the link by the adversary to slow down the convergence is not possible as the system has already reached bipartite consensus. This proof provides an easy way of calculating optimal strategy and picturing the DoS attack. In the cost $J(u)$, each term corresponds to the value of $w_{ij}$. Therefore, it is inferred that the optimal strategy for the attacker to break the bipartite consensus and thus increase or maximize the cost function is by breaking the link with the highest value of $w_{ij}$. The attacker has complete knowledge of the system and that he is able to disconnect a link for the entire process. It will be interesting to see how the solution would change if there was a constraint on the amount of time an attacker can break a link.
 
\section{SIMULATION RESULTS} \label{sec:simulation}
We consider a structurally balanced graph with $n=4$ as shown in fig \ref{fig3}(a) for illustration. 
\begin{figure}[h]
    \centering
    \includegraphics[scale=1.5]{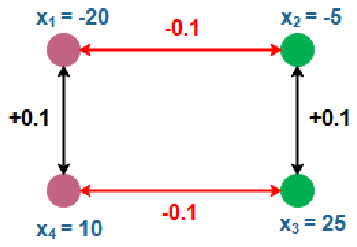}
    \includegraphics[scale=0.9]{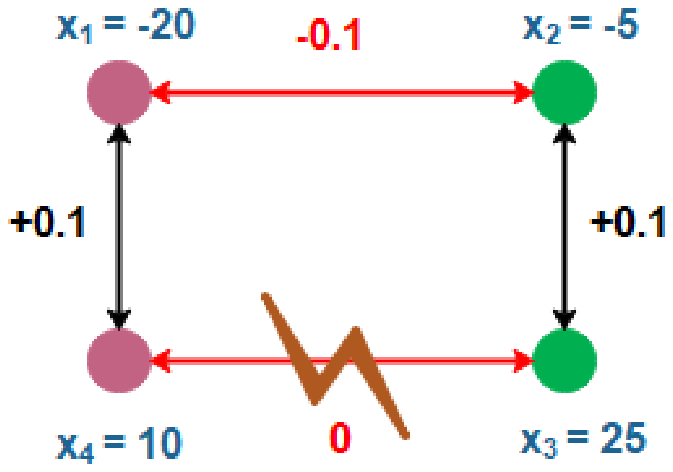}
    \caption{(a) Example of a four agent structurally balance graph with cooperative and antagonistic interactions; (b) The graph under attack of the adversary where the link $(3, 4)$ was attacked and therefore the weight of that particular link was changed to $0$.}
    \label{fig3}
\end{figure}
Therefore, according to the theory discussed above, the Laplacian matrix $L_1$ of the system can be written as, 
\begin{eqnarray*}
    L_{1}=\left[\begin{array}{cccc}
         0.2 & 0.1 & 0 & -0.1 \\
         0.1 & 0.2 & -0.1 & 0 \\
         0 & -0.1 & 0.2 & 0.1 \\
         -0.1 & 0 & 0.1 & 0.2 
    \end{array}\right]
\end{eqnarray*}
The initial conditions are chosen randomly between any interval. Here, let the initial conditions be $[-20, -5, 25, 10]^{T}$. This is the bipartite consensus where the four agents formed two different clusters. Now the adversary acts upon. In order for the adversary to give the best attack, he tries to attack that link that has the highest $w_{ij}$ value. Now, the $w_{ij}$ value is calculated for all the links. This value is found to be the highest for the link $(3, 4)$, with $w_{34}=122.5$. Therefore, the optimal strategy of the adversary to slow the bipartite consensus is to break this link at that particular time. Therefore, the $L_{1}$ matrix changes to a new matrix $L_{2}$.
 \begin{figure}[h]
    \centering
    \includegraphics[scale=0.7]{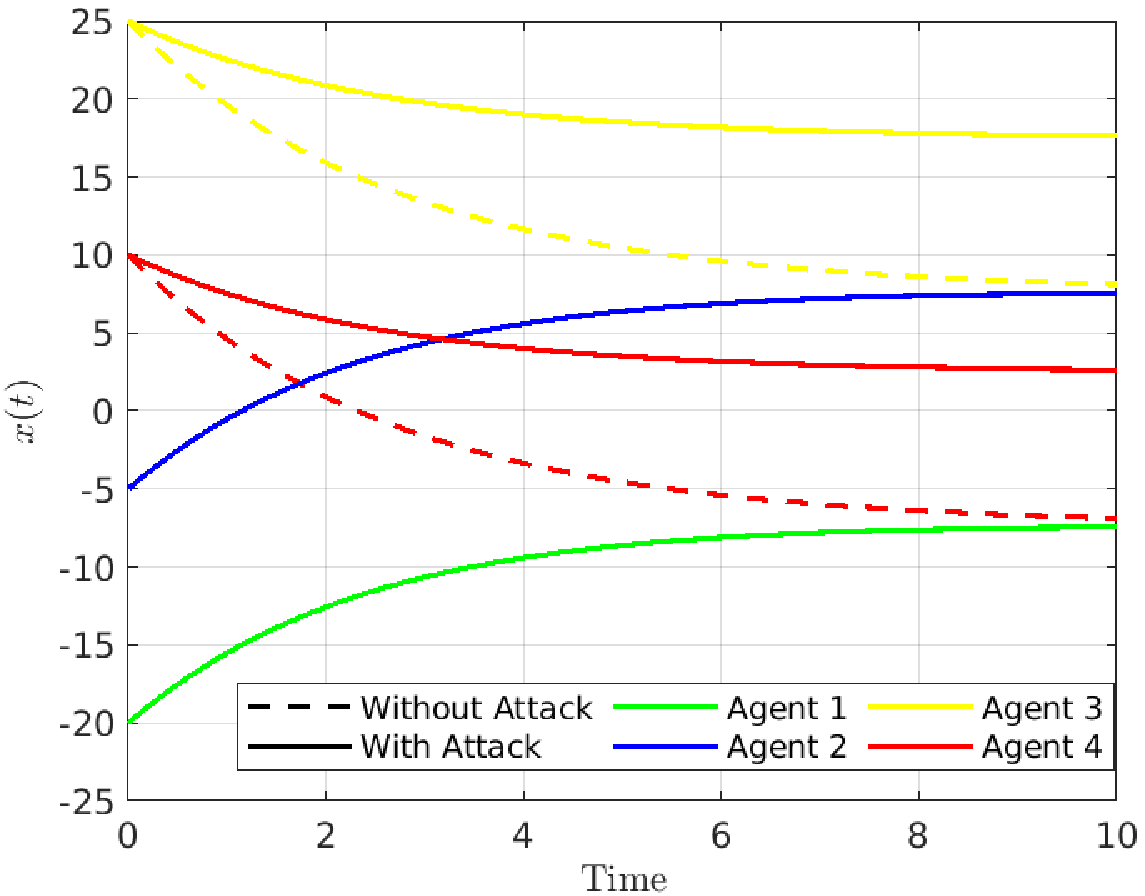}
    \includegraphics[scale=0.7]{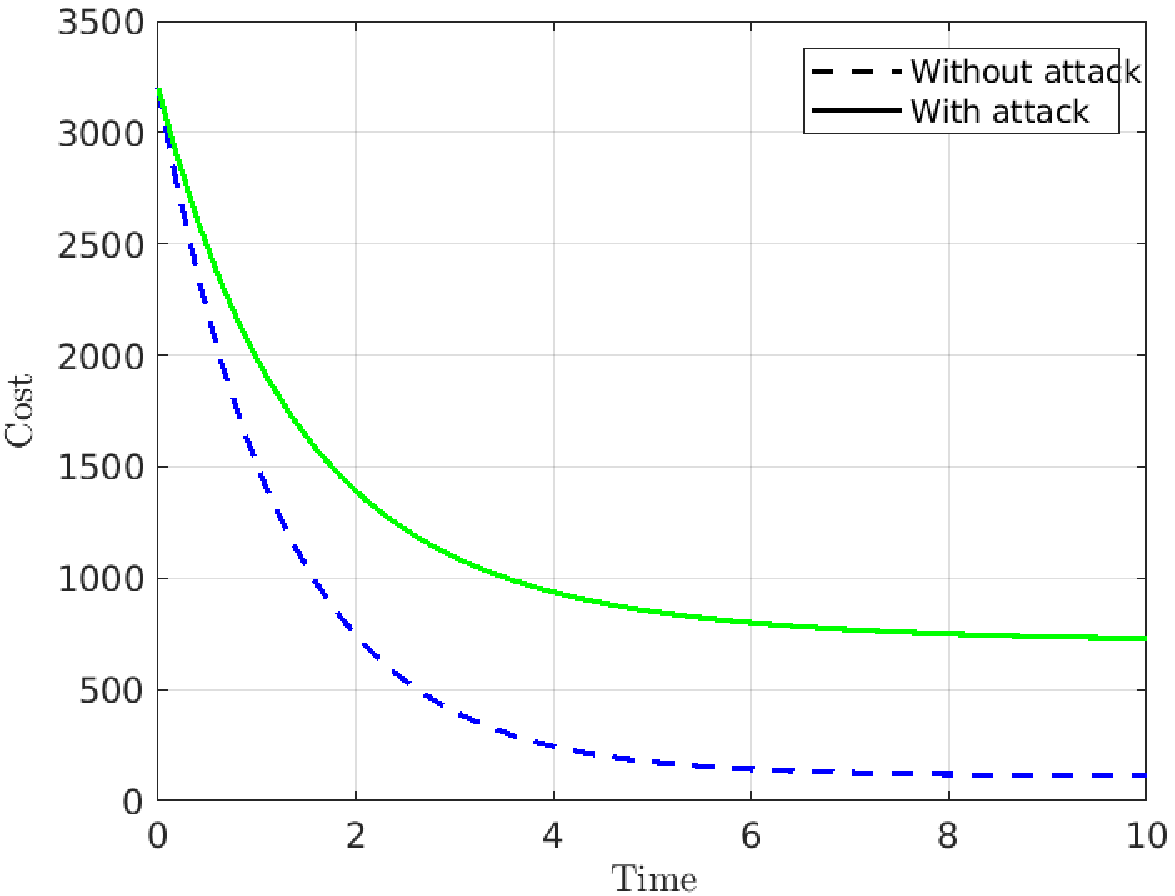}
    \caption{(a) Simulation result of the graph before and after the attack by the adversary; (b) The cost of the bipartite consensus $(J)$ with and without the attack}
    \label{fig4}       
\end{figure} 
From fig \ref{fig4}(a), we can infer that before the presence of an adversary the four agent network graph described in the illustration moves to bipartite consensus. These converging values are only different in sign. The speed of convergence can be adjusted by changing the weight of each link. After the attack by the adversary at the link $(3, 4)$, the bipartite consensus is slowed. From the simulation results, it is clear that the agents $1$ and $2$ still moves to the said consensus point whereas the adversary fails the bipartite consensus by breaking the link $(3, 4)$. The agents $3$ and $4$ deviates from the original trajectory. From \ref{fig4}(b), it is clear that the cost of the bipartite consensus $J(u)$ before the presence of the attack is nearly zero but when the adversary acts upon the graph by breaking the link $(3, 4)$, the cost increases.  The cost of bipartite consensus is thus maximized, slowing down the bipartite consensus.
 
\section{CONCLUSION} \label{sec:conclusion}
When a set of agents interact with each other using both cooperative and antagonistic interactions, and the interaction graph is structurally balanced, the agents end up with a bipartite consensus. The effect of the denial of service attack on a bipartite consensus problem was studied here. The optimal plan of action that can be followed by an adversary at the time of the attack on a structurally balanced multi-agent system was found out. It has been demonstrated by MATLAB simulation results. The future work will explore the case of having an attacker with incomplete knowledge and the additional constraint on time. 

\bibliographystyle{ieeetran}
\bibliography{main}

\end{document}